\newtheorem{Thm}{Theorem}
\DeclareMathOperator{\Tr}{\mathrm{Tr}}
\DeclareMathOperator{\zero}{\mathbf{0}}
\DeclareMathOperator{\Rank}{\mathrm{Rank}}
\DeclareMathOperator{\diag}{\mathrm{diag}}
\DeclareMathOperator{\maxo}{maximize}
\newcommand{\qed}{\hfill \ensuremath{\blacksquare}}
\newcommand{\abs}[1]{\lvert#1\rvert}
\author{Marco Breiling\authorrefmark{2}, Derrick Wing Kwan Ng\authorrefmark{1}, Christian Rohde\authorrefmark{2}, Frank Burkhardt\authorrefmark{2}, and Robert Schober\authorrefmark{1}\thanks{\authorrefmark{1}The authors are  with the Institute for Digital Communications, Friedrich-Alexander-University Erlangen-N\"urnberg (FAU), Germany. \authorrefmark{2}The authors are with Fraunhofer Institute for Integrated Circuits (IIS), Germany. This work was supported in part by the AvH Professorship Program of the Alexander von Humboldt Foundation.

}}
\title{Resource Allocation for Outdoor-to-Indoor Multicarrier Transmission with Shared UE-side Distributed Antenna Systems }
\begin{document}
\maketitle
\begin{abstract}
In this paper, we study the resource allocation algorithm design  for  downlink
multicarrier transmission with a shared user equipment (UE)-side distributed antenna system (SUDAS) which
 utilizes both licensed and unlicensed frequency bands  for improving the system throughput.  The joint UE selection and  transceiver processing matrix design is formulated as a non-convex optimization problem for the maximization of  the end-to-end system throughput (bits/s).  In order to obtain a tractable resource allocation algorithm, we first show that the optimal transmitter precoding and receiver post-processing matrices  jointly diagonalize the end-to-end communication channel. Subsequently,  the optimization problem is converted to a scalar optimization problem for multiple parallel channels, which is solved by using an asymptotically optimal iterative algorithm.
 Simulation results illustrate that the proposed resource allocation algorithm for the SUDAS
 achieves an excellent system performance and provides a  spatial multiplexing gain for single-antenna UEs.

\end{abstract}

\renewcommand{\baselinestretch}{0.93}
\large\normalsize

\section{Introduction}
\label{sect1}
Ubiquitous and high data rates   are a basic requirement for the next generation  wireless communication systems.
As a result, orthogonal frequency division multiple access (OFDMA) has been adopted as an air interface for high speed wideband communication systems,  due to its flexibility in resource allocation and resistances against multipath fading \cite{book:david_wirelss_com}.  On the other hand, multiple-input multiple-output (MIMO)  technology has received considerable interest in the past decades as it provides extra degrees of freedom in the spatial domain which facilitates a trade-off between multiplexing gain and diversity gain. Besides, distributed antenna systems  (DAS), a special form of MIMO,  can be deployed  to cover the dead spots in wireless networks, extending service coverage, improving spectral efficiency, and  mitigating interference. However, the number of antennas available at the user equipments (UEs) is constrained by the physical size of the devices in practice which leads to a limited spatial multiplexing gain in MIMO systems.
As an alternative, multiuser MIMO has been proposed to realize the potential performance gain of MIMO systems by sharing the antennas across the different terminals  of a communication system \cite{CR:virtual_MIMO0,JR:virtual_MIMO3}.    In \cite{CR:virtual_MIMO0}, the energy efficiency
 of a three-node multiuser MIMO system  was studied for the
compress-and-forward protocol. In \cite{JR:virtual_MIMO3}, optimal power allocation was investigated for the maximization of
the effective capacity of a multiuser MIMO system for the case when the receivers collaborate with each other. Recently, there has been a growing interest in combining the
concepts of multiuser MIMO, DAS, and OFDMA to improve the performance of wireless communication systems.  In \cite{JR:virtual_MIMO1}, the authors studied suboptimal resource allocation algorithm design for multiuser MIMO-OFDMA systems. In \cite{JR:virtual_MIMO2}, a utility-based low complexity scheduling scheme was proposed for multiuser MIMO-OFDMA systems to strike a balance between system throughput  and computational complexity.  On the other hand, the performance of multiuser MIMO in DAS with limited feedback was investigated in \cite{CR:virtual_MIMO}. However,  the system performance of the systems \cite{JR:virtual_MIMO1}--\cite{CR:virtual_MIMO}  is limited by the system bandwidth which is a very scarce resource in   the licensed frequency bands.  On the contrary, the unlicensed frequency spectrum around $60$ GHz offers a bandwidth of $7$ GHz for wireless communications. The utilization of both  licensed  and unlicensed  frequency bands introduces a paradigm shift
in system and resource allocation algorithm design due to
the resulting new opportunities and challenges. Yet, the potential performance gains of such a hybrid system have not been investigated in the literature.

 In this paper, we propose a  shared user equipment (UE)-side distributed antenna system (SUDAS) to assist the downlink communication.  In particular, SUDAS utilizes both  licensed  and unlicensed  frequency bands simultaneously to facilitate a spatial multiplexing gain for single-antenna receivers. We formulate the resource allocation algorithm design for  SUDAS assisted OFDMA downlink transmission systems as a non-convex optimization problem.  By exploiting the structure of the optimal base station (BS)
precoding and the SUDAS post-processing matrices, the considered
matrix optimization problem is transformed into an optimization problem with scalar optimization variables. Capitalizing on this transformation, we develop an iterative algorithm
which achieves the asymptotically optimal performance of the proposed SUDAS.

 The rest of the paper is organized as follows. In Section
\ref{sect:OFDMA_AF_network_model}, we outline the model for the
considered SUDAS  assisted OFDMA downlink transmission system. In Section \ref{sect:cross-layer system}, we formulate the
resource allocation as a non-convex optimization problem.  Simulation  results for the performance of the proposed algorithm are presented in Section \ref{sect:result-discussion}. In
Section \ref{sect:conclusion}, we conclude with a brief summary of
our results.

\section{SUDAS Assisted OFDMA Network Model}\label{sect:OFDMA_AF_network_model}
In this section, after introducing the notation used in this
paper, we present the adopted channel and signal models.

\subsection{Notation}
We use boldface capital and lower case letters to denote matrices and vectors, respectively. $\mathbf{A}^H$, $\det(\mathbf{A})$, $\Tr(\mathbf{A})$, and $\Rank(\mathbf{A})$ represent the  Hermitian transpose, determinant, trace, and rank of  matrix $\mathbf{A}$; $\mathbf{A}\succeq \mathbf{0}$ indicates that $\mathbf{A}$ is a   positive semidefinite matrix; $\mathbf{I}_N$ is the $N\times N$ identity matrix; $\mathbb{C}^{N\times M}$ denotes the set of all $N\times M$ matrices with complex entries; $\mathbb{H}^N$ denotes the set of all $N\times N$ Hermitian matrices; $\diag(x_1, \cdots, x_K)$ denotes a diagonal matrix with the diagonal elements given by $\{x_1, \cdots, x_K\}$; the circularly symmetric complex Gaussian (CSCG) distribution with mean  $\mu$ and variance $\sigma^2$ is denoted by ${\cal CN}(\mu,\sigma^2)$; $\sim$ stands for ``distributed as"; $[x]^+$ returns  $0$ when $x<0$ and returns $x$ if $x\ge 0$; $\cal E\{\cdot\}$ denotes
statistical expectation.
 \begin{figure}[t]
 \centering
\includegraphics[width=3.5in]{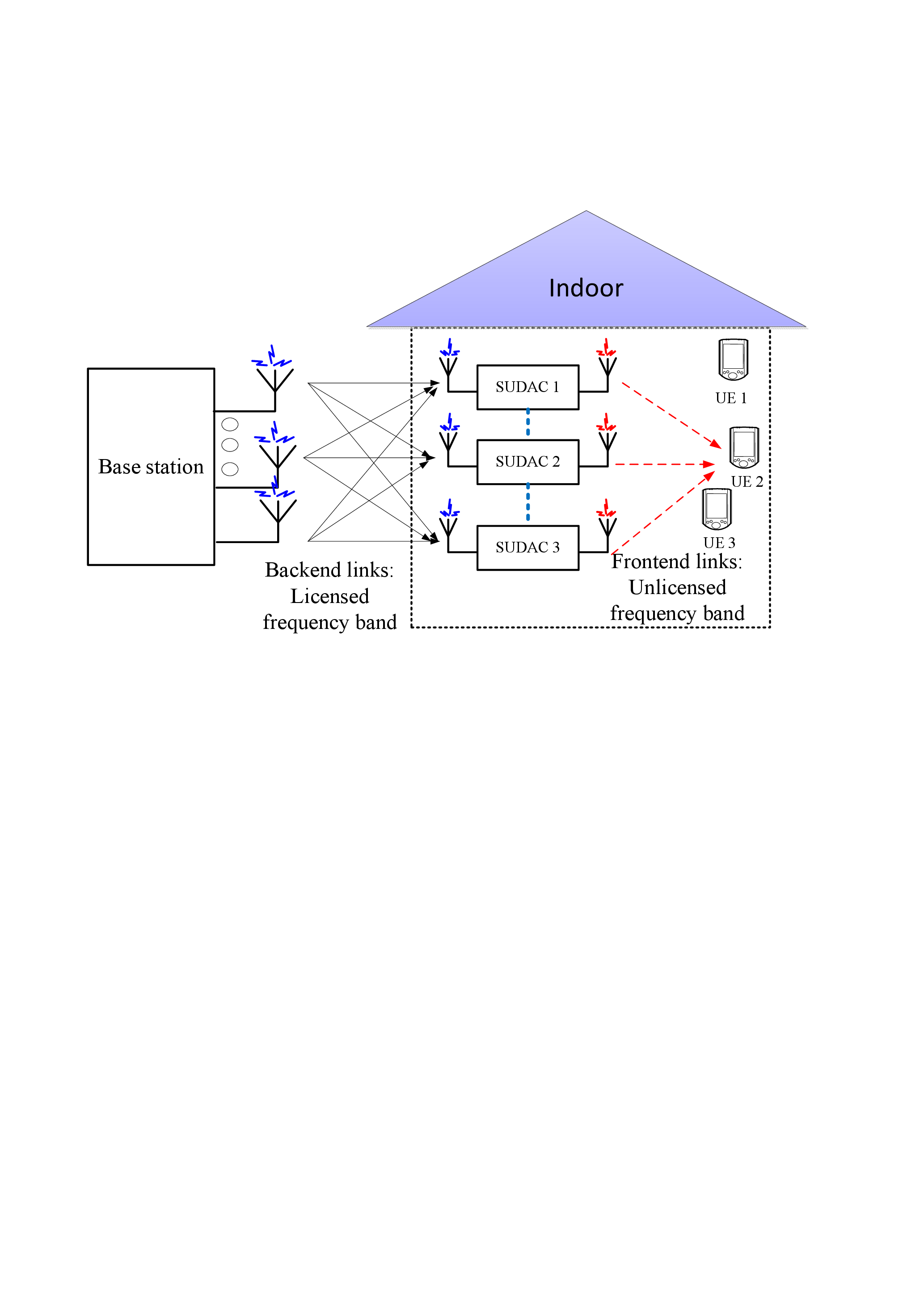}
 \caption{Downlink  communication system model with a base station (BS), $K=3$  user equipments (UEs), and $M=3$  SUDACs. The backend links use a licensed frequency band and the frontend links use an unlicensed frequency band such as the millimeter wave band (e.g. $\sim 60 $ GHz). }
 \label{fig:system_model}
\end{figure}

\subsection{SUDAS Downlink Communication Model}
We consider a SUDAS assisted
OFDMA downlink  transmission network which consists of one
BS, a SUDAS, and $K$ UEs, cf. Figure \ref{fig:system_model}. A SUDAS consists of $M$ shared user equipment (UE)-side distributed antenna components (SUDACs). A SUDAC is a small and cheap device\footnote{In practice, a SUDAC could be integrated into electrical devices such as electrical wall outlets, switches, and light outlets.} which utilizes both a  licensed and an unlicensed frequency band simultaneously for increasing the end-to-end communication data rate. Conceptually, a basic SUDAC is equipped with one antenna for use in a licensed band and one  antenna for use in an unlicensed band. Besides, a SUDAC is equipped with a mixer  to perform frequency up-conversion/down-conversion. Specifically, for the downlink, the SUDAC first receives the signal from the BS in a licensed frequency band (backend link).  Then the SUDAC processes the received signal and forwards the signal to the UEs in an unlicensed frequency band (frontend link). We note that since  signal reception and transmission at each SUDAC are separated in frequency, cf. Figure \ref{fig:system_model2}, simultaneous signal reception and transmission  can be performed  which is not possible for traditional relaying systems\footnote{Since the BS-to-SUDAS and SUDAS-to-UE links operate on two different frequencies, the SUDAS should not be considered  a relaying system \cite{JR:Jeff_7_ways}.} due to the limited spectrum availability in the licensed bands. In practice, a huge bandwidth is available in the unlicensed bands. For instance,  there is  nearly $7$ GHz of unlicensed frequency spectrum available  for information transmission in the  $57 - 64$ GHz band (millimeter wave band).
 \begin{figure}[t]
 \centering
\includegraphics[width=3.5in]{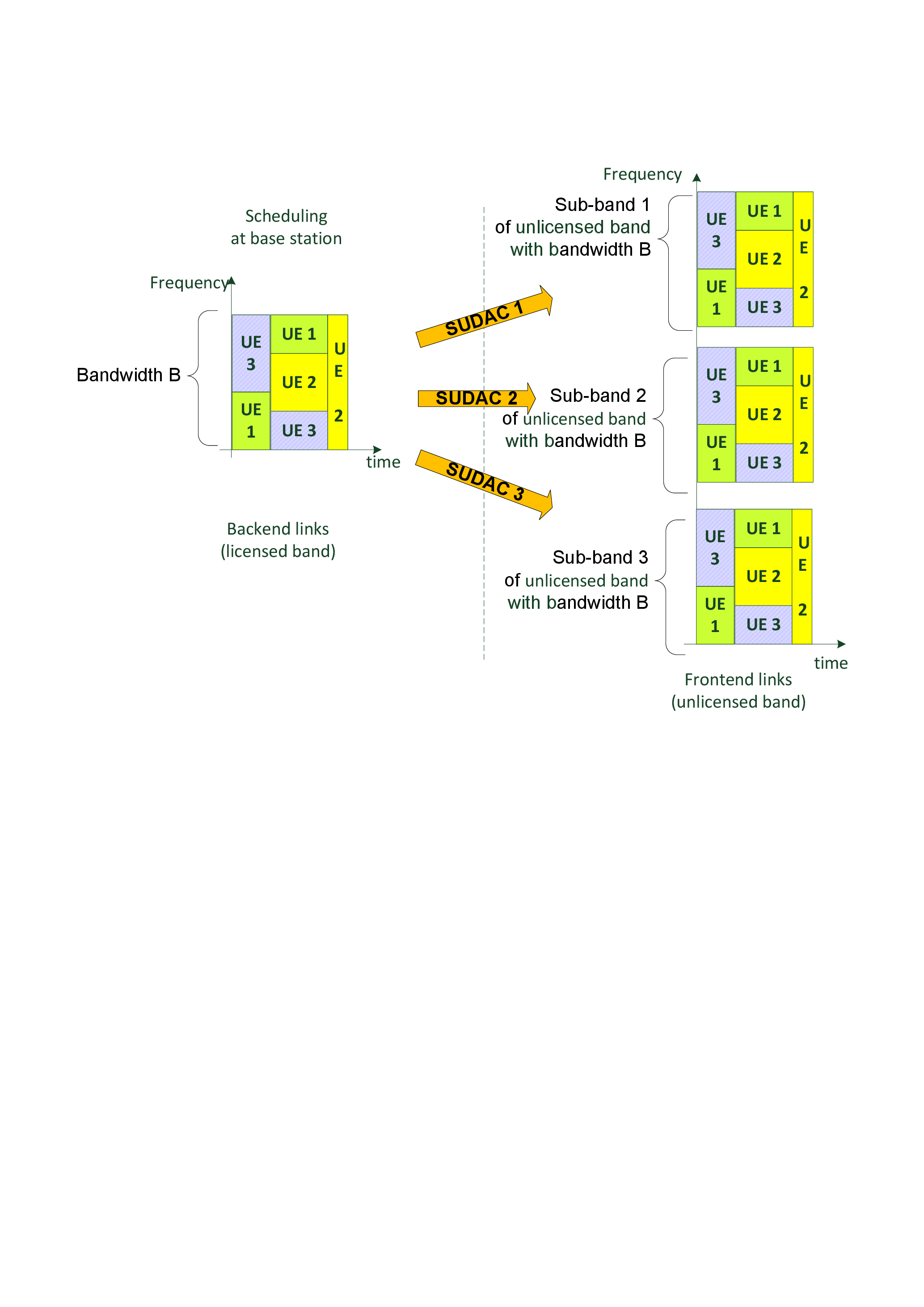}
 \caption{Illustration of signal forwarding from the licensed band to different unlicensed frequency sub-bands in the SUDAS. }
 \label{fig:system_model2}
\end{figure}
In this paper, we study the potential system performance gain for outdoor-to-indoor transmission achieved by the SUDAS. We  assume that the SUDACs are installed in electrical wall outlets and can cooperate with each other by sharing the channel state information and received signals, i.e., $\mathbf{y}^{[i,k]}_{\mathrm{S}}$, via low data-rate power line communication links. In other words, joint processing between SUDACs is possible such that the SUDACs can fully utilize their antennas\footnote{We note that different SUDAS configurations are possible in practice including non-cooperative SUDACs. In this paper,   we are interested in the case when the SUDACs are willing to cooperate to explore the maximum system performance.}.   Besides, the UEs only listen to the unlicensed frequency band.

\subsection{SUDAS Downlink Communication Channel Model}\label{sect:channel_model}
 \label{sect:channel_model}
 The BS is equipped with  $N_{\mathrm{T}}$ transmit antennas transmitting signals in a licensed frequency band.   The UEs are  single-antenna devices receiving the signal in the unlicensed frequency band. We focus on a wideband multicarrier communication system with $n_{\mathrm{F}}$ subcarriers. The  communication channel is  time-invariant within a scheduling slot. The BS performs spatial multiplexing in the licensed band.  The data symbol vector $\mathbf{d}^{[i,k]}\in
\mathbb{C}^{N_{\mathrm{S}}\times 1}$ on subcarrier $i\in\{1,\,\ldots,\,n_F\}$
 for UE $k$ is precoded at the BS as
\begin{eqnarray}
\mathbf{x}^{[i,k]}=\mathbf{P}^{[i,k]}\mathbf{d}^{[i,k]},\label{eqn:source_precoding}
\end{eqnarray}
where $\mathbf{P}^{[i,k]}\in\mathbb{C}^{N_{\mathrm{T}}\times N_{\mathrm{S}}}$ is the
 precoding matrix adopted by the BS on subcarrier $i$.  The signals received on subcarrier $i$ at the $M$ SUDACs  for UE $k$ are given by
\begin{eqnarray}
\mathbf{y}^{[i,k]}_{\mathrm{S}}&=&\mathbf{H}^{[i]}_{\mathrm{B}\rightarrow \mathrm{S}}\mathbf{x}^{[i,k]}+\mathbf{z}^{[i,k]},
\label{eqn:relay_channel_model:AF}
\end{eqnarray}
where $\mathbf{y}^{[i,k]}_{\mathrm{S}}=[{y}^{[i,k]}_{\mathrm{S}_1}\ldots {y}^{[i,k]}_{\mathrm{S}_m}\ldots {y}^{[i,k]}_{\mathrm{S}_M}]^T$ and ${y}^{[i,k]}_{\mathrm{S}_m}\in\{1,\ldots,M\}$ denotes the received signal at SUDAC $m$. $\mathbf{H}^{[i]}_{\mathrm{B}\rightarrow S} $ is the $ M\times  N_{\mathrm{T}}$ MIMO channel
matrix between the BS and the $M$ SUDACs on subcarrier $i$ and captures
the joint effects of  path loss, shadowing, and multi-path fading.  $\mathbf{z}^{[i,k]}$ is the
additive white Gaussian noise (AWGN) vector with distribution ${\cal
CN}(\zero,\mathbf{\Sigma})$ on subcarrier $i$ impairing the $M$ SUDACs where
$\mathbf{\Sigma}$ is an $M\times M$  diagonal
covariance matrix with each main diagonal element given by $N_0$.

Then,  each SUDAC performs frequency repetition in the unlicensed band. In particular, the  $M$ SUDACs   multiply  the received signal
vector on subcarrier $i$ by $\mathbf{F}^{[i,k]}\in\mathbb{C}^{M\times M}$ and forward
the processed signal vector to UE $k$ on subcarrier  $i$ in $M$  different independent frequency sub-bands in the unlicensed spectrum, cf. Figure \ref{fig:system_model2}.  In other words, each SUDAC forwards its received signal in a different sub-band and thereby avoids further multiple access interference in the unlicensed spectrum.

Then, the
signal  received  at UE $k$ on subcarrier $i$ from the SUDACs in the $M$ frequency bands, $\mathbf{y}^{[i,k]}_{\mathrm{S}\rightarrow\mathrm{UE}}\in\mathbb{C}^{M\times 1}$ ,  can be expressed as
\begin{eqnarray}
\hspace*{-3mm}&&\mathbf{y}^{[i,k]}_{\mathrm{S}\rightarrow\mathrm{UE}}\\
\hspace*{-3mm}&=&\notag\hspace*{-1mm}\mathbf{H}^{[i,k]}_{\mathrm{S}\rightarrow\mathrm{UE}}
\mathbf{F}^{[i,k]}\Big(\mathbf{H}^{[i]}_{\mathrm{B}\rightarrow\mathrm{S}}\mathbf{x}^{[i,k]}+\mathbf{z}^{[i]}\Big)\hspace*{-1mm}+\hspace*{-1mm}\mathbf{n}^{[i,k]}
\\
\hspace*{-3mm}&=&\hspace*{-1mm}\underbrace{\mathbf{H}^{[i,k]}_{\mathrm{S}\rightarrow\mathrm{UE}}\mathbf{F}^{[i,k]}\mathbf{H}^{[i]}_{\mathrm{B}\rightarrow\mathrm{S}}\mathbf{P}^{[i,k]}\mathbf{s}^{[i,k]}}_{
\mbox{desired
signal}}+\underbrace{\mathbf{H}^{[i,k]}_{\mathrm{S}\rightarrow\mathrm{UE}}\mathbf{F}^{[i,k]}\mathbf{z}^{[i]}}_{\mbox{amplified
noise}}\hspace*{-1mm}+\mathbf{n}^{[i,k]}.\notag
\end{eqnarray}
The $m$-th element of vector $\mathbf{y}^{[i,k]}_{\mathrm{S}\rightarrow\mathrm{UE}}$ represents the received signal at UE $k$ in the $m$-th unlicensed frequency sub-band. Besides, since the SUDACs forward the received signals in different frequency bands, $\mathbf{H}^{[i,k]}_{\mathrm{S}\rightarrow\mathrm{UE}}$  is a diagonal matrix with the diagonal elements representing the channel gain between the SUDACs and UE $k$ on  subcarrier $i$ in the unlicensed sub-band $m$. $\mathbf{n}^{[i,k]}\in\mathbb{C}^{M\times 1}$ is the AWGN
vector at UE $k$ on subcarrier $i$  with distribution ${\cal
CN}(\zero,\mathbf{\Sigma}_k)$.  $\mathbf{\Sigma}_k$
 is an ${M \times M}$ diagonal matrix and each main
diagonal element is equal to $N_0$. In order to simplify the subsequent
mathematical expressions and without loss of generality, we adopt
in the following a normalized noise variance of $N_0=1$ at all
receive antennas of the SUDACs and the UEs.

We assume that $ M \ge N_{\mathrm{S}}$ and  UE $k$ employs a linear receiver for estimating the  data vector symbol received in the $M$ different frequency bands in the unlicensed band.
The estimated data vector symbols, $\mathbf{\hat{d}}^{[i,k]}\in\mathbb{C}^{N_\mathrm{S}\times 1}$, on subcarrier
$i$ is given by:
\begin{eqnarray}
\mathbf{\hat{d}}^{[i,k]}=(\mathbf{W}^{[i,k]})^H\mathbf{y}^{[i,k]}_{\mathrm{S}\rightarrow\mathrm{UE}},
\end{eqnarray}
where $\mathbf{W}^{[i,k]}\in \mathbb{C}^{M\times M}$ is a
post-processing matrix used for subcarrier $i$ at UE $k$. Without loss of generality, we assume that  ${\cal
E}\{\mathbf{{d}}^{[i,k]}(\mathbf{{d}}^{[i,k]})^H\}=\mathbf{I}_{N_\mathrm{S}}$. As a result,
the mean square error (MSE) matrix for the data transmission on
subcarrier $i$ for UE $k$ via the SUDAS and the optimal post processing matric are given by
\begin{eqnarray}
\mathbf{E}^{[i,k]}\hspace*{-2mm}&=&\hspace*{-2mm}{\cal E}\{(\mathbf{\hat{d}}^{[i,k]}-\mathbf{{d}}^{[i,k]})(\mathbf{\hat{d}}^{[i,k]}-\mathbf{{d}}^{[i,k]})^H\}\notag\\
\hspace*{-2mm}&=&\hspace*{-2mm}\Big[\mathbf{I}_{N_\mathrm{S}}+(\mathbf{\Gamma}^{[i,k]})^H(\mathbf{\Theta}^{[i,k]})^{-1}
(\mathbf{\Gamma}^{[i,k]})\Big]^{-1},\\
\mbox{and }\mathbf{W}^{[i,k]}\hspace*{-2mm}&=&\hspace*{-2mm}(\mathbf{\Gamma}^{[i,k]}(\mathbf{\Gamma}^{[i,k]})^H
+\mathbf{\Theta}^{[i,k]})^{-1}\mathbf{\Gamma}^{[i,k]},
\label{eqn:AF-FD-self-interference}
\end{eqnarray}
respectively, where $\mathbf{\Gamma}^{[i,k]}$ is the effective MIMO channel
matrix between the BS and UE $k$ via the SUDAS on subcarrier $i$, and $\mathbf{\Theta}^{[i,k]}$
is the corresponding equivalent noise covariance matrix. These
matrices are given by
\begin{eqnarray}
\mathbf{\Gamma}^{[i,k]}\hspace*{-2mm}&=&\hspace*{-2mm}\mathbf{H}^{[i,k]}_{\mathrm{S}\rightarrow\mathrm{UE}}\mathbf{F}^{[i,k]}\mathbf{H}^{[i]}_{\mathrm{B}\rightarrow\mathrm{S}}\mathbf{P}^{[i,k]}
\quad \mbox{and}\quad\notag\\
\label{eqn:AF-FD-equivalent_noise}\mathbf{\Theta}^{[i,k]}\hspace*{-2mm}&=&\hspace*{-2mm}
\Big(\mathbf{H}^{[i,k]}_{\mathrm{S}\rightarrow\mathrm{UE}}\mathbf{F}^{[i,k]}\Big)
\Big(\mathbf{H}^{[i,k]}_{\mathrm{S}\rightarrow\mathrm{UE}}\mathbf{F}^{[i,k]}\Big)^H+\mathbf{I}_{M}.
\end{eqnarray}

\section{Resource Allocation and Scheduling Design }\label{sect:cross-layer system}
In this section, we first introduce the adopted system performance
measure. Then, the resource allocation and scheduling design is formulated as an optimization problem.
\subsection{System Throughput}
\label{subsect:Instaneous_Mutual_information}
The end-to-end achievable data rate $R^{[i,k]}$ on subcarrier $i$ between the BS and UE $k$ via the SUDAS
  is given by  \cite{JR:DP_diagonalization}
  \begin{eqnarray}\label{eqn:cap_log_det}
R^{[i,k]}=-\log_2\Big(\det[\mathbf{E}^{[i,k]}]\Big).
\end{eqnarray}
The data rate (bit/s)
delivered to UE $k$  can be expressed as
\begin{eqnarray}
\label{eqn:user_TP}{\cal \rho}^{[k]}=
\sum_{i=1}^{n_F}s^{[i,k]}R^{[i,k]},
\end{eqnarray}
where  $s^{[i,k]}\in\{0,1\}$ is the binary subcarrier allocation
indicator. The  weighted system throughput via the SUDAS  is given by
 \begin{eqnarray}
\label{eqn:avg-sys-TP} \mathrm{TP}({\cal P},{\cal
S})=\sum_{k=1}^K w^{[k]}{\cal \rho}^{[k]} ,
\end{eqnarray}
where ${\cal
P}=\{\mathbf{P}^{[i,k]},\mathbf{F}^{[i,k]}\}$ and
${\cal{S}}=\{s^{[i,k]}\}$ are the precoding and subcarrier allocation
policies, respectively.  $w^{[k]}$ is a positive constant which indicates the priority of different UEs. It is known that by adjusting the values of  $w^{[k]}$, different kinds of fairness such as max-min fairness and proportional fairness can be achieved \cite{JR:Cross_layer_weighted_sum_cap1,CN:Cross_layer_weighted_sum_cap1}.

\subsection{Problem Formulation}\label{sect:cross-Layer_formulation}
 The optimal precoding matrices, ${\cal P}^*=\{\mathbf{P}^{[i,k]*},\mathbf{F}^{[i,k]*} \}$, and the optimal subcarrier allocation policy, ${\cal
S}^*=\{s^{[i,k]*}\}$, can be obtained by solving the following optimization problem:
\begin{eqnarray}\label{eqn:cross-layer-formulation}
&&\hspace*{-3mm}\underset{{{\cal P},{\cal S}}}\maxo\ \
\mathrm{TP}({\cal P},{\cal S}) \notag\\
\mbox{s.t.} &\mbox{C1:}& \sum_{k=1}^{K}\sum_{i=1}^{n_F}s^{[i,k]}\Tr\Big(\mathbf{P}^{[i,k]}(\mathbf{P}^{[i,k]})^H\Big) \le P_\mathrm{T},\notag\\
&\mbox{C2:}& \sum_{k=1}^{K}\sum_{i=1}^{n_F}s^{[i,k]}\Tr\Big(\mathbf{G}^{[i,k]}\Big) \le M P_{\max}, \notag\\
&\mbox{C3:}& \sum_{k=1}^K s^{[i,k]}\le 1, \quad\forall i,\notag\\
&\mbox{C4:}&s^{[i,k]}\in\{0,1\}, \
\ \forall i,k,
\end{eqnarray}
where $\Tr\Big(\mathbf{G}^{[i,k]}\Big)$ is the total
power transmitted by  the SUDAS on subcarrier $i$ for UE $k$ and
\begin{eqnarray}\notag
\hspace*{-1mm}\mathbf{G}^{[i,k]}\hspace*{-3mm}&=&\hspace*{-3mm}
\mathbf{F}^{[i,k]}\Big(\mathbf{H}^{[i]}_{\mathrm{B}\rightarrow\mathrm{S}}\mathbf{P}^{[i,k]}
(\mathbf{P}^{[i,k]})^H(\mathbf{H}^{[i]}_{\mathrm{B}\rightarrow\mathrm{S}})^H
\hspace*{-1mm}+\hspace*{-1mm}
\mathbf{I}_M\Big)(\mathbf{F}^{[i,k]})^H\hspace*{-1mm}.
\end{eqnarray}
Constants $P_{\mathrm{T}}$ and $M P_{\max}$ in C1 and C2 are the  maximum transmit power allowances for the BS and the SUDAS ($M$ SUDACs), respectively, where $P_{\max}$ is the average transmit power budget for a SUDAC.  Constraints C3 and C4 are imposed to
guarantee that each subcarrier serves at most one UE.


\subsection{Transformation of the Optimization Problem}
\label{transf_opt_prob}
The considered optimization problem consists of a non-convex objective function and combinatorial constraints which do not facilitate a tractable resource allocation algorithm design. In order to obtain an efficient
resource allocation algorithm, we   study the structure of the optimal precoding policy. For this purpose, we now define the following matrices
before stating an important theorem concerning the  structure of the optimal precoding matrices. Using singular
value decomposition (SVD), the channel matrices $\mathbf{H}^{[i]}_{\mathrm{B}\rightarrow\mathrm{S}}$ and
$\mathbf{H}^{[i,k]}_{\mathrm{S}\rightarrow\mathrm{UE}}$ can be written as
 \begin{eqnarray}\label{eqn:SVD_HSR_HRD}
 \mathbf{H}^{[i]}_{\mathrm{B}\rightarrow\mathrm{S}}&=&\mathbf{U}^{[i]}_{\mathrm{B}\rightarrow\mathrm{S}}
 \mathbf{\Lambda}^{[i]}_{\mathrm{B}\rightarrow\mathrm{S}}(\mathbf{V}^{[i]}_{\mathrm{B}\rightarrow\mathrm{S}})^H \,\mbox{and} \notag\\
 \mathbf{H}^{[i,k]}_{\mathrm{S}\rightarrow\mathrm{UE}}&=&\mathbf{U}^{[i,k]}_{\mathrm{S}\rightarrow\mathrm{UE}}\mathbf{\Lambda}^{[i,k]}_{\mathrm{S}\rightarrow\mathrm{UE}}(\mathbf{V}^{[i,k]}_{\mathrm{S}\rightarrow\mathrm{UE}})^H ,
 \end{eqnarray}
 respectively, where $\mathbf{U}^{[i]}_{\mathrm{B}\rightarrow\mathrm{S}}\in{\mathbb{C}^{ M\times M}},\mathbf{V}^{[i]}_{\mathrm{B}\rightarrow\mathrm{S}}\in{\mathbb{C}^{N_{\mathrm{T}}\times N_{\mathrm{T}}}},
 \mathbf{U}^{[i,k]}_{\mathrm{S}\rightarrow\mathrm{UE}}\in{\mathbb{C}^{M\times M}},$ and  $\mathbf{V}^{[i,k]}_{\mathrm{S}\rightarrow\mathrm{UE}}\in{\mathbb{C}^{M\times M}}$
 are unitary matrices. $\mathbf{\Lambda}^{[i]}_{\mathrm{B}\rightarrow\mathrm{S}}$ and $\mathbf{\Lambda}^{[i,k]}_{\mathrm{S}\rightarrow\mathrm{UE}}$ and  are $ M \times  N_{\mathrm{T}}$ and  $ M \times M$ matrices with main diagonal
 element vectors $\diag(\mathbf{\Lambda}^{[i]}_{\mathrm{B}\rightarrow\mathrm{S}})=\Big[\sqrt{\gamma_{\mathrm{B}\rightarrow\mathrm{S},1}^{[i]}}\, \sqrt{\gamma_{\mathrm{B}\rightarrow\mathrm{S},2}^{[i]}}\,\ldots\sqrt{\gamma_{\mathrm{B}\rightarrow\mathrm{S},R_1}^{[i]}}\Big]$
  and  $\diag(\mathbf{\Lambda}^{[i,k]}_{\mathrm{S}\rightarrow\mathrm{UE}})= \Big[\sqrt{\gamma_{\mathrm{S}\rightarrow\mathrm{UE},1}^{[i,k]}}\,
  \sqrt{\gamma_{\mathrm{S}\rightarrow\mathrm{UE},2}^{[i,k]}}\,$ $\ldots\,
  \sqrt{\gamma_{\mathrm{S}\rightarrow\mathrm{UE},R_2}^{[i,k]}}\Big]$, where the elements are ordered in ascending order, respectively. $R_1=\Rank(\mathbf{H}^{[i]}_{\mathrm{B}\rightarrow\mathrm{S}})$ and $R_2=\Rank(\mathbf{H}^{[i,k]}_{\mathrm{S}\rightarrow\mathrm{UE}})$.   Variables $\gamma_{\mathrm{B}\rightarrow\mathrm{S},n}^{[i]}$ and $\gamma_{\mathrm{S}\rightarrow\mathrm{UE},n}^{[i,k]}$ represent the equivalent channel-to-noise ratio (CNR) on spatial
   channel $n$ in subcarrier $i$ of the BS-to-SUDAS channel and the SUDAS-to-UE $k$ channel, respectively.

 We are now ready to introduce the following theorem.
\begin{Thm}\label{Thm:Diagonalization_optimal}
Suppose that
$\Rank(\mathbf{P}^{[i,k]})=\Rank(\mathbf{F}^{[i,k]})=N_{\mathrm{S}}\le
 \min\{\Rank(\mathbf{H}^{[i]}_{\mathrm{S}\rightarrow\mathrm{UE}}),\Rank(\mathbf{H}^{[i]}_{\mathrm{B}\rightarrow\mathrm{S}})\}$.
 In this case, the optimal linear
  precoding matrices used at the BS and the SUDACs jointly diagonalize the BS-to-SUDAS-to-UE channels on each subcarrier, despite the non-convexity of the objective function.  The optimal precoding matrices are given by
\begin{eqnarray}\label{eqn:matrix_P}
\mathbf{P}^{[i,k]}&=&\mathbf{\widetilde V}^{[i]}_{\mathrm{B}\rightarrow\mathrm{S}}\mathbf{\Lambda}^{[i,k]}_{\mathrm{B}}
\ \mbox{and}\ \notag\\
\mathbf{F}^{[i,k]}&=&\mathbf{\widetilde V}^{[i,k]}_{\mathrm{S}\rightarrow\mathrm{UE}}\mathbf{\Lambda}^{[i,k]}_{\mathrm{F}}(\mathbf{\widetilde U}^{[i,k]}_{\mathrm{B}\rightarrow\mathrm{S}})^H
      \end{eqnarray}
respectively, where $\mathbf{\widetilde V}^{[i]}_{\mathrm{B}\rightarrow\mathrm{S}}$, $\mathbf{\widetilde V}^{[i,k]}_{\mathrm{S}\rightarrow\mathrm{UE}}$, and $\mathbf{\widetilde U}^{[i,k]}_{\mathrm{B}\rightarrow\mathrm{S}}$ are the $N_{\mathrm{S}}$ rightmost columns of $\mathbf{ V}^{[i]}_{\mathrm{B}\rightarrow\mathrm{S}}$, $\mathbf{ V}^{[i,k]}_{\mathrm{S}\rightarrow\mathrm{UE}}$, and $\mathbf{ U}^{[i,k]}_{\mathrm{B}\rightarrow\mathrm{S}}$, respectively.   Matrices $\mathbf{\Lambda}^{[i,k]}_{\mathrm{B}}\in
\mathbb{C}^{N_{\mathrm{S}}\times N_{\mathrm{S}}}$ and $\mathbf{\Lambda}^{[i,k]}_{\mathrm{F}}\in
\mathbb{C}^{N_{\mathrm{S}} \times N_{\mathrm{S}}}$ are diagonal matrices with diagonal element
vectors
$\diag(\mathbf{\Lambda}^{[i,k]}_{\mathrm{B}})=\Big[\sqrt{P_{\mathrm{B}\rightarrow\mathrm{S},1}^{[i,k]}}
\,\ldots\,\sqrt{P_{\mathrm{B}\rightarrow\mathrm{S},n}^{[i,k]}}\,\ldots\,\sqrt{P_{\mathrm{B}\rightarrow\mathrm{S},N_{\mathrm{S}}}^{[i,k]}}\Big]$,
 and
$\diag(\mathbf{\Lambda}^{[i,k]}_{\mathrm{F}})=\Big[\sqrt{P_{\mathrm{S}\rightarrow\mathrm{UE},1}^{[i,k]}}
 \,\ldots\,\sqrt{P_{\mathrm{S}\rightarrow\mathrm{UE},n}^{[i,k]}}\,\ldots\,
\sqrt{P_{\mathrm{S}\rightarrow\mathrm{UE},N_{\mathrm{S}}}^{[i,k]}}\Big]$, respectively. Variables $P_{\mathrm{B}\rightarrow\mathrm{S},n}^{[i,k]}$
and $P_{\mathrm{S}\rightarrow\mathrm{UE},n}^{[i,k]}$ are, respectively, the equivalent
transmit powers of the BS-to-SUDAS link and the SUDAS-to-UE link for UE $k$ on spatial channel $n$ in subcarrier $i$.
\end{Thm}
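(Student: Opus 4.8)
The plan is to reduce the matrix problem to an equivalent one driven only by the two channels' singular values, to upper-bound the per-subcarrier rate by a quantity that decouples into $N_{\mathrm{S}}$ parallel scalar channels, and finally to show that the precoders in \eqref{eqn:matrix_P} attain this bound with equality; this last point is what makes the diagonalizing structure globally optimal \emph{despite} the non-convexity of the objective. Since the weighted throughput $\mathrm{TP}(\mathcal{P},\mathcal{S})$ is a sum of the per-subcarrier rates $R^{[i,k]}$ and since C1, C2 depend on $\mathbf{P}^{[i,k]}$ and $\mathbf{F}^{[i,k]}$ only through $\Tr(\mathbf{P}^{[i,k]}(\mathbf{P}^{[i,k]})^H)$ and $\Tr(\mathbf{G}^{[i,k]})$, it suffices to fix a subcarrier assignment and a pair of per-link trace budgets and to establish the structure for a single index pair, which I abbreviate by dropping $[i,k]$.

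First I would rewrite the rate. Combining \eqref{eqn:cap_log_det} with \eqref{eqn:AF-FD-equivalent_noise} and the identity $\det(\mathbf{I}+\mathbf{\Gamma}^H\mathbf{\Theta}^{-1}\mathbf{\Gamma})=\det(\mathbf{\Theta}+\mathbf{\Gamma}\mathbf{\Gamma}^H)/\det(\mathbf{\Theta})$ gives $R=\log_2\det(\mathbf{I}+\mathbf{\Gamma}^H\mathbf{\Theta}^{-1}\mathbf{\Gamma})$. Substituting the SVDs \eqref{eqn:SVD_HSR_HRD} and introducing the rotated precoders $\bar{\mathbf{P}}=(\mathbf{V}_{\mathrm{B}\rightarrow\mathrm{S}})^H\mathbf{P}$ and $\bar{\mathbf{F}}=(\mathbf{V}_{\mathrm{S}\rightarrow\mathrm{UE}})^H\mathbf{F}\,\mathbf{U}_{\mathrm{B}\rightarrow\mathrm{S}}$, I would verify that $\mathbf{\Gamma}=\mathbf{U}_{\mathrm{S}\rightarrow\mathrm{UE}}\mathbf{\Lambda}_{\mathrm{S}\rightarrow\mathrm{UE}}\bar{\mathbf{F}}\mathbf{\Lambda}_{\mathrm{B}\rightarrow\mathrm{S}}\bar{\mathbf{P}}$ and $\mathbf{\Theta}=\mathbf{U}_{\mathrm{S}\rightarrow\mathrm{UE}}(\mathbf{\Lambda}_{\mathrm{S}\rightarrow\mathrm{UE}}\bar{\mathbf{F}}\bar{\mathbf{F}}^H\mathbf{\Lambda}_{\mathrm{S}\rightarrow\mathrm{UE}}^H+\mathbf{I})\mathbf{U}_{\mathrm{S}\rightarrow\mathrm{UE}}^H$, so that the unitary factor $\mathbf{U}_{\mathrm{S}\rightarrow\mathrm{UE}}$ cancels inside the rate and $R$ depends on $(\mathbf{P},\mathbf{F})$ only through $\bar{\mathbf{P}}$, $\bar{\mathbf{F}}$ and the diagonal gain matrices $\mathbf{\Lambda}_{\mathrm{B}\rightarrow\mathrm{S}}$, $\mathbf{\Lambda}_{\mathrm{S}\rightarrow\mathrm{UE}}$. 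Because $\Tr(\mathbf{P}\mathbf{P}^H)=\Tr(\bar{\mathbf{P}}\bar{\mathbf{P}}^H)$ and $\Tr(\mathbf{G})=\Tr(\bar{\mathbf{F}}(\mathbf{\Lambda}_{\mathrm{B}\rightarrow\mathrm{S}}\bar{\mathbf{P}}\bar{\mathbf{P}}^H\mathbf{\Lambda}_{\mathrm{B}\rightarrow\mathrm{S}}^H+\mathbf{I})\bar{\mathbf{F}}^H)$ are invariant under this bijective unitary substitution, the optimization over $(\mathbf{P},\mathbf{F})$ is equivalent to one over $(\bar{\mathbf{P}},\bar{\mathbf{F}})$ whose only data are $\mathbf{\Lambda}_{\mathrm{B}\rightarrow\mathrm{S}}$ and $\mathbf{\Lambda}_{\mathrm{S}\rightarrow\mathrm{UE}}$. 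This already identifies $\mathbf{V}_{\mathrm{B}\rightarrow\mathrm{S}}$, $\mathbf{U}_{\mathrm{B}\rightarrow\mathrm{S}}$ and $\mathbf{V}_{\mathrm{S}\rightarrow\mathrm{UE}}$ as the correct bases.

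The main work, and the step I expect to be the obstacle, is to show that the reduced rate is maximized when the cascade $\mathbf{\Lambda}_{\mathrm{S}\rightarrow\mathrm{UE}}\bar{\mathbf{F}}\mathbf{\Lambda}_{\mathrm{B}\rightarrow\mathrm{S}}\bar{\mathbf{P}}$ is diagonal, so that the end-to-end channel splits into $N_{\mathrm{S}}$ parallel scalar channels. Rather than attack the non-convex problem directly, I would derive a genie-type upper bound: writing the rate as $\log_2\det(\mathbf{I}+\mathbf{\Gamma}_w^H\mathbf{\Gamma}_w)$ with the whitened effective channel $\mathbf{\Gamma}_w=\mathbf{\Theta}^{-1/2}\mathbf{\Gamma}$, Hadamard's inequality bounds the determinant by the product of its diagonal entries, with equality if and only if $\mathbf{\Gamma}_w^H\mathbf{\Gamma}_w$ is diagonal. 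The delicate point is that $\bar{\mathbf{F}}$ enters both the signal term $\mathbf{\Gamma}$ and the noise whitening $\mathbf{\Theta}$, so signal and noise are coupled; I would resolve this by first optimizing $\bar{\mathbf{P}}$ for fixed $\bar{\mathbf{F}}$ (a standard eigen-beamforming/water-filling step that aligns the columns of $\bar{\mathbf{P}}$ with the right singular vectors of $\mathbf{\Theta}^{-1/2}\mathbf{\Lambda}_{\mathrm{S}\rightarrow\mathrm{UE}}\bar{\mathbf{F}}\mathbf{\Lambda}_{\mathrm{B}\rightarrow\mathrm{S}}$) and then showing that any non-diagonal $\bar{\mathbf{F}}$ can be replaced, at the same value of $\Tr(\mathbf{G})$, by a diagonal one that pairs the strongest mode of the BS-to-SUDAS hop with the strongest mode of the SUDAS-to-UE hop without decreasing the rate. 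The resulting bound is a function of the per-mode gains $\gamma_{\mathrm{B}\rightarrow\mathrm{S},n}$, $\gamma_{\mathrm{S}\rightarrow\mathrm{UE},n}$ and the equivalent powers only, and it is separable across the $N_{\mathrm{S}}$ modes.

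It then remains to check achievability and to select the active modes. Substituting the candidate precoders \eqref{eqn:matrix_P} into $\mathbf{\Gamma}$ and $\mathbf{\Theta}$, the orthogonality relations between the retained singular vectors make both $\mathbf{\Gamma}$ and $\mathbf{\Theta}$ diagonal, so the upper bound is attained with equality and the design is globally optimal despite the non-convexity of the objective. Finally, since $N_{\mathrm{S}}\le\min\{R_1,R_2\}$, only $N_{\mathrm{S}}$ parallel channels are activated, and the separable bound is maximized by retaining the $N_{\mathrm{S}}$ modes with the largest end-to-end gains; as the diagonal entries of $\mathbf{\Lambda}_{\mathrm{B}\rightarrow\mathrm{S}}$ and $\mathbf{\Lambda}_{\mathrm{S}\rightarrow\mathrm{UE}}$ are in ascending order, these correspond to the rightmost $N_{\mathrm{S}}$ columns of $\mathbf{V}_{\mathrm{B}\rightarrow\mathrm{S}}$, $\mathbf{U}_{\mathrm{B}\rightarrow\mathrm{S}}$ and $\mathbf{V}_{\mathrm{S}\rightarrow\mathrm{UE}}$. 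Collecting these columns into $\mathbf{\widetilde V}_{\mathrm{B}\rightarrow\mathrm{S}}$, $\mathbf{\widetilde U}_{\mathrm{B}\rightarrow\mathrm{S}}$, $\mathbf{\widetilde V}_{\mathrm{S}\rightarrow\mathrm{UE}}$ and absorbing the per-mode gains into $\mathbf{\Lambda}_{\mathrm{B}}$ and $\mathbf{\Lambda}_{\mathrm{F}}$ yields exactly \eqref{eqn:matrix_P}.
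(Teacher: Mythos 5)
Dropping the indices $[i,k]$ as you do: your opening and closing steps are sound. The change of variables $\bar{\mathbf{P}}=\mathbf{V}_{\mathrm{B}\rightarrow\mathrm{S}}^H\mathbf{P}$, $\bar{\mathbf{F}}=\mathbf{V}_{\mathrm{S}\rightarrow\mathrm{UE}}^H\mathbf{F}\,\mathbf{U}_{\mathrm{B}\rightarrow\mathrm{S}}$ does leave the rate and both trace constraints invariant, and the final achievability and mode-selection steps are routine. The genuine gap is at the center, where the whole content of the theorem sits: your claim that any non-diagonal $\bar{\mathbf{F}}$ can be replaced, at the same value of $\Tr(\mathbf{G})$, by a diagonal one pairing the strongest modes of the two hops without decreasing the rate \emph{is} the theorem --- it is precisely the assertion that the optimal forwarding matrix matches the singular bases --- and your proposal gives no argument for it. Hadamard's inequality cannot supply that argument. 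It shows that, for a \emph{fixed} feasible pair, the rate is at most $\sum_n \log_2\bigl(1+[\mathbf{\Gamma}_w^H\mathbf{\Gamma}_w]_{nn}\bigr)$ with equality iff $\mathbf{\Gamma}_w^H\mathbf{\Gamma}_w$ is diagonal; but this upper bound is not a function of the power budgets alone. When $\bar{\mathbf{F}}$ is non-diagonal, the entries $[\mathbf{\Gamma}_w^H\mathbf{\Gamma}_w]_{nn}$ depend on the off-diagonal structure of $\bar{\mathbf{F}}$ through the whitening factor $\bigl(\mathbf{\Lambda}_{\mathrm{S}\rightarrow\mathrm{UE}}\bar{\mathbf{F}}\bar{\mathbf{F}}^H\mathbf{\Lambda}_{\mathrm{S}\rightarrow\mathrm{UE}}^H+\mathbf{I}\bigr)^{-1}$ and obey no separable power constraint, so maximizing the Hadamard bound over the feasible set is essentially as hard as the original matrix problem, and nothing forces its maximizer to be diagonal. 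In single-hop MIMO the Hadamard argument closes because, after rotation, the diagonal of the input covariance still carries a plain trace constraint; that is exactly the structure destroyed by the signal--noise coupling you yourself flag as ``the delicate point'' and then leave unresolved.

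There is also a secondary flaw in your proposed resolution: optimizing $\bar{\mathbf{P}}$ for fixed $\bar{\mathbf{F}}$ is not the ``standard eigen-beamforming/water-filling step,'' because constraint C2 also restricts $\bar{\mathbf{P}}$ through $\Tr(\mathbf{G})=\Tr\bigl(\bar{\mathbf{F}}(\mathbf{\Lambda}_{\mathrm{B}\rightarrow\mathrm{S}}\bar{\mathbf{P}}\bar{\mathbf{P}}^H\mathbf{\Lambda}_{\mathrm{B}\rightarrow\mathrm{S}}^H+\mathbf{I})\bar{\mathbf{F}}^H\bigr)$. With both trace constraints active, the optimal $\bar{\mathbf{P}}$ aligns with generalized eigenvectors of the pencil formed by the whitened channel Gram matrix and $\lambda\mathbf{I}+\beta\,\mathbf{\Lambda}_{\mathrm{B}\rightarrow\mathrm{S}}^H\bar{\mathbf{F}}^H\bar{\mathbf{F}}\mathbf{\Lambda}_{\mathrm{B}\rightarrow\mathrm{S}}$, not with the right singular vectors of $\mathbf{\Theta}^{-1/2}\mathbf{\Lambda}_{\mathrm{S}\rightarrow\mathrm{UE}}\bar{\mathbf{F}}\mathbf{\Lambda}_{\mathrm{B}\rightarrow\mathrm{S}}$ as claimed. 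For contrast, the paper closes the central gap with a different tool: it writes the objective in terms of the MSE matrix $\mathbf{E}=\bigl(\mathbf{I}+\mathbf{\Gamma}^H\mathbf{\Theta}^{-1}\mathbf{\Gamma}\bigr)^{-1}$, invokes Schur-concavity of the per-subcarrier rate together with majorization theory to conclude that the objective is maximized when $\mathbf{E}$ is diagonal, and then argues that, among all precoder pairs achieving a given rate, the jointly diagonalizing pair \eqref{eqn:matrix_P} minimizes both the BS and the SUDAS transmit powers, hence is feasible whenever anything is. A lemma of that kind --- majorization, or a rate-preserving power-reduction argument --- is what your outline needs at its core; without it, the proof does not go through.
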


\begin{proof}
Please refer to the Appendix.
\end{proof}

By Theorem \ref{Thm:Diagonalization_optimal}, the end-to-end MIMO
channel on subcarrier $i$ is converted into $N_{\mathrm{S}}$ parallel spatial
channels if the optimal precoding matrices are
used.

Therefore, the achievable rate on subcarrier $i$ between the BS and UE $k$ via the SUDAS  can
be expressed as \cite{JR:kwan_AF_relay,JR:MIMO_HD_relay1}:
\begin{eqnarray}
 R^{[i,k]}&=&\sum_{n=1}^{N_{\mathrm{S}}}\log_2(1+\mathrm{SINR}^{[i,k]}_n)\quad \mbox{where}\\
\mathrm{SINR}^{[i,k]}_n&=&
\notag\frac{\gamma_{\mathrm{B}\rightarrow\mathrm{S},n}^{[i]}P_{\mathrm{B}\rightarrow\mathrm{S},n}^{[i,k]}
P_{\mathrm{S}\rightarrow\mathrm{UE},n}^{[i,k]}\gamma_{\mathrm{S}\rightarrow\mathrm{UE},n}^{[i,k]}}{1+\gamma_{\mathrm{B}\rightarrow\mathrm{S},n}^{[i]}P_{\mathrm{B}\rightarrow\mathrm{S},n}^{[i,k]}
+ P_{\mathrm{S}\rightarrow\mathrm{UE},n}^{[i,k]}\gamma_{\mathrm{S}\rightarrow\mathrm{UE},n}^{[i,k]}}
\end{eqnarray}
is the received signal-to-interference-plus-noise-ratio (SINR). Although the objective function is now a scalar function with respect  to the optimization variables, it is still non-convex.  To obtain a tractable resource allocation algorithm design, we propose the following objective function approximation. In particular, the end-to-end SINR on subcarrier $i$ for UE $k$ can be approximated as
\begin{eqnarray}\label{eqn:SINR_1}
\mathrm{SINR}^{[i,k]}_n\hspace*{-1mm} &\approx&\hspace*{-1mm}\widetilde{\mathrm{SINR}}^{[i,k]}_n\quad \mbox{where}\\
\widetilde{\mathrm{SINR}}^{[i,k]}_n\hspace*{-1mm}&=&\hspace*{-1mm}\frac{\gamma_{\mathrm{B}\rightarrow\mathrm{S},n}^{[i]}P_{\mathrm{B}\rightarrow\mathrm{S},n}^{[i,k]}
P_{\mathrm{S}\rightarrow\mathrm{UE},n}^{[i,k]}\gamma_{\mathrm{S}\rightarrow\mathrm{UE},n}^{[i,k]}}{\gamma_{\mathrm{B}\rightarrow\mathrm{S},n}^{[i]}P_{\mathrm{B}\rightarrow\mathrm{S},n}^{[i,k]}
+ P_{\mathrm{S}\rightarrow\mathrm{UE},n}^{[i,k]}\gamma_{\mathrm{S}\rightarrow\mathrm{UE},n}^{[i,k]}}\label{eqn:SINR_2}.
\end{eqnarray}
\newcounter{mytempeqncnt}
\begin{figure*}[ht]
\setcounter{mytempeqncnt}{\value{equation}}
\setcounter{equation}{16}
 \begin{eqnarray}\label{eqn:P1}
\hspace*{-4mm}P_{\mathrm{B}\rightarrow\mathrm{S},n}^{[i,k]}\hspace*{-3mm}&=&\hspace*{-2mm}\Bigg[
   \frac{\gamma_{\mathrm{S}\rightarrow\mathrm{UE},n}^{[i,k]} P_{\mathrm{S}\rightarrow\mathrm{UE},n}^{[i,k]} \left(\hspace*{-0.5mm}\frac{\sqrt{4 w^{[k]} \gamma_{\mathrm{B}\rightarrow\mathrm{S},n}^{[i]}(1+
   \gamma_{\mathrm{S}\rightarrow\mathrm{UE},n}^{[i,k]} P_{\mathrm{S}\rightarrow\mathrm{UE},n}^{[i,k]})\hspace*{-0.5mm}+\hspace*{-0.5mm}(\gamma_{\mathrm{S}\rightarrow\mathrm{UE},n}^{[i,k]})^2 \lambda
   (P_{\mathrm{S}\rightarrow\mathrm{UE},n}^{[i,k]})^2 \ln(2)}}{\sqrt{\lambda } \sqrt{\ln
   (2)}}\hspace*{-0.5mm}-\hspace*{-0.5mm}\gamma_{\mathrm{S}\rightarrow\mathrm{UE},n}^{[i,k]}P_{\mathrm{S}\rightarrow\mathrm{UE},n}^{[i,k]}\hspace*{-0.5mm}-\hspace*{-0.5mm}2\right)}{2 (\gamma_{\mathrm{B}\rightarrow\mathrm{S},n}^{[i]} \gamma_{\mathrm{S}\rightarrow\mathrm{UE},n}^{[i,k]}
   P_{\mathrm{S}\rightarrow\mathrm{UE},n}^{[i,k]}\hspace*{-0.5mm}+\hspace*{-0.5mm}\gamma_{\mathrm{B}\rightarrow\mathrm{S},n}^{[i]})}\Bigg]^+\\
 \label{eqn:P2}\hspace*{-4mm}  P_{\mathrm{S}\rightarrow\mathrm{UE},n}^{[i,k]}\hspace*{-3mm}&=& \hspace*{-2mm}\Bigg[\frac{\gamma_{\mathrm{B}\rightarrow\mathrm{S},n}^{[i]} P_{\mathrm{B}\rightarrow\mathrm{S},n}^{[i,k]} \left(\frac{\sqrt{(\gamma_{\mathrm{B}\rightarrow\mathrm{S},n}^{[i]})^2 \beta (P_{\mathrm{B}\rightarrow\mathrm{S},n}^{[i,k]})^2 \ln(2)
 + (\gamma_{\mathrm{B}\rightarrow\mathrm{S},n}^{[i]} P_{\mathrm{B}\rightarrow\mathrm{S},n}^{[i,k]}+1) 4w^{[k]} \gamma_{\mathrm{S}\rightarrow\mathrm{UE},n}^{[i,k]}}}{\sqrt{\beta }
   \sqrt{\ln (2)}}-\gamma_{\mathrm{B}\rightarrow\mathrm{S},n}^{[i]} P_{\mathrm{B}\rightarrow\mathrm{S},n}^{[i,k]}-2\right)}{2 (\gamma_{\mathrm{B}\rightarrow\mathrm{S},n}^{[i]} \gamma_{\mathrm{S}\rightarrow\mathrm{UE},n}^{[i,k]} P_{\mathrm{B}\rightarrow\mathrm{S},n}^{[i,k]}+\gamma_{\mathrm{S}\rightarrow\mathrm{UE},n}^{[i,k]})}\Bigg]^+
   \end{eqnarray}
  \hrulefill
\end{figure*}We note that this approximation is asymptotically tight in high SNR. The next step is to handle the combinatorial constraint C4 in (\ref{eqn:cross-layer-formulation}). To this end, we adopt the time-sharing relaxation approach. In particular, we relax $ s^{[i,k]}$ in C4 such that it is a real valued optimization variable between zero and one \cite{JR:Roger_OFDMA,JR:Time_sharing_wei_yu,JR:Kwan_DF}, i.e., $0\le  s^{[i,k]}\le 1$. It is shown in \cite{JR:Time_sharing_wei_yu} that the time-sharing relaxation is asymptotically optimal for a sufficient number of subcarriers\footnote{It has been shown by simulation in
\cite{CN:large_subcarriers} that the performance gap due to time-sharing relaxation  is virtually zero even for OFDMA systems with only   $8$ subcarriers.}. Next, we define two auxiliary optimization variables $\tilde P_{\mathrm{B}\rightarrow\mathrm{S},n}^{[i,k]}=s^{[i,k]}P_{\mathrm{B}\rightarrow\mathrm{S},n}^{[i,k]}$ and $\tilde P_{\mathrm{S}\rightarrow\mathrm{UE},n}^{[i,k]}=s^{[i,k]}P_{\mathrm{S}\rightarrow\mathrm{UE},n}^{[i,k]}$ and rewrite the optimization problem as:\setcounter{equation}{15}
\begin{eqnarray}\label{eqn:cross-layer-formulation-transformed}
&&\hspace*{-10mm}\underset{{{\widetilde P_{\mathrm{B}\rightarrow\mathrm{S}}^{[i,k]},\widetilde P_{\mathrm{S}\rightarrow\mathrm{UE},n}^{[i,k]}},}{\cal S}}\maxo\ \
\sum_{k=1}^K\sum_{i=1}^{N_{\mathrm{F}}}\sum_{n=1}^{N_{\mathrm{S}}}s^{[i,k]}\log_2\Big(1+\frac{\overline{\mathrm{{SINR}}}^{[i,k]}_n}{s^{[i,k]}}\Big) \notag\\
\mbox{s.t.} &\mbox{C1:}& \sum_{k=1}^{K}\sum_{i=1}^{n_F}\sum_{n=1}^{N_{\mathrm{S}}}\widetilde P_{\mathrm{B}\rightarrow\mathrm{S},n}^{[i,k]} \le P_\mathrm{T},\notag\\
&\mbox{C2:}& \sum_{k=1}^{K}\sum_{i=1}^{n_F}\sum_{n=1}^{N_{\mathrm{S}}} \widetilde  P_{\mathrm{S}\rightarrow\mathrm{UE},n}^{[i,k]}\le M P_{\max}, \notag\\
&\mbox{C3:}& \sum_{k=1}^K s^{[i,k]}\le 1, \quad\forall i,\quad \mbox{C4: } 0\le s^{[i,k]}\le 1, \
\ \forall i,k,\notag\\
& \mbox{C5:}&
\widetilde  P_{\mathrm{B}\rightarrow\mathrm{S},n}^{[i,k]}, \widetilde  P_{\mathrm{S}\rightarrow\mathrm{UE},n}^{[i,k]}\ge 0, \quad
\forall i,k,n,\quad
\end{eqnarray}\addtocounter{equation}{2}
where $\overline{\mathrm{{SINR}}}^{[i,k]}_n=\widetilde{\mathrm{SINR}}^{[i,k]}_n\Big|_{\Phi}$ and $\Phi=\{P_{\mathrm{B}\rightarrow\mathrm{S},n}^{[i,k]}=\tilde P_{\mathrm{B}\rightarrow\mathrm{S},n}^{[i,k]}/s^{[i,k]},P_{\mathrm{S}\rightarrow\mathrm{UE},n}^{[i,k]}=\tilde P_{\mathrm{S}\rightarrow\mathrm{UE},n}^{[i,k]}/s^{[i,k]}\}$. It can be shown that optimization problem (\ref{eqn:cross-layer-formulation-transformed}) is jointly concave with respect to the auxiliary optimization variables and $s^{[i,k]}$. We note that by solving  optimization problem (\ref{eqn:cross-layer-formulation-transformed})  for $\widetilde P_{\mathrm{B}\rightarrow\mathrm{S}}^{[i,k]}$, $\widetilde P_{\mathrm{S}\rightarrow\mathrm{UE},n}^{[i,k]}$, and $s^{[i,k]}$, we can
recover the solution for $ P_{\mathrm{B}\rightarrow\mathrm{S}}^{[i,k]}$ and $P_{\mathrm{S}\rightarrow\mathrm{UE},n}^{[i,k]}$.  In other words, the solution of (\ref{eqn:cross-layer-formulation-transformed}) is asymptotically optimal with respect to (\ref{eqn:cross-layer-formulation}) for high SNR and a sufficiently large number of subcarriers.

In the following, we propose an algorithm for solving the transformed problem in (\ref{eqn:cross-layer-formulation-transformed}).

\begin{table}[t]\caption{Iterative Resource Allocation Algorithm}\label{table:algorithm}
\vspace*{-0.6cm}
\renewcommand\thealgorithm{}
\begin{algorithm} [H]                    
\caption{Alternating Optimization}          
\label{alg1}                           
\begin{algorithmic} [1]
\STATE Initialize the maximum number of iterations $L_{\max}$ and a small constant $\kappa\rightarrow 0$
\STATE Set iteration index $l=0$ and initialize a feasible solution point $P_{\mathrm{B}\rightarrow\mathrm{S},n}^{[i,k]}(l),P_{\mathrm{S}\rightarrow\mathrm{UE},n}^{[i,k]}(l),s^{[i,k]}(l)$, $l=l+1$

\REPEAT [Loop]
\STATE Solve  (\ref{eqn:cross-layer-formulation-transformed})  for $P_{\mathrm{B}\rightarrow\mathrm{S},n}^{[i,k]}$ and $s^{[i,k]}$ with a fixed $P_{\mathrm{S}\rightarrow\mathrm{UE},n}^{[i,k]}(l-1)$ by using (\ref{eqn:P1}) and (\ref{eqn:sub_selection}) which leads to  intermediate power allocation variables $P_{\mathrm{B}\rightarrow\mathrm{S},n}^{[i,k]'}$ and a subcarrier allocation policy $s^{[i,k]'}$

\STATE Solve  (\ref{eqn:cross-layer-formulation-transformed})  for $P_{\mathrm{S}\rightarrow\mathrm{UE},n}^{[i,k]}$ and $s^{[i,k]}$ with $P_{\mathrm{B}\rightarrow\mathrm{S},n}^{[i,k]'}$ via equation (\ref{eqn:P2}) and (\ref{eqn:sub_selection}) which leads to  intermediate power allocation variables $P_{\mathrm{S}\rightarrow\mathrm{UE},n}^{[i,k]'}$ and a subcarrier allocation policy $s^{[i,k]''}$
\IF{$\abs{P_{\mathrm{S}\rightarrow\mathrm{UE},n}^{[i,k]'}-P_{\mathrm{S}\rightarrow\mathrm{UE},n}^{[i,k]}(l-1)}\le \kappa$ and $\abs{P_{\mathrm{S}\rightarrow\mathrm{UE},n}^{[i,k]'}-P_{\mathrm{S}\rightarrow\mathrm{UE},n}^{[i,k]}(l-1)}\le \kappa$ and $\abs{s^{[i,k]'}-s^{[i,k]}(l-1)}\le \kappa$ }
\STATE
Convergence = \TRUE \RETURN
$\{P_{\mathrm{S}\rightarrow\mathrm{UE},n}^{[i,k]'},P_{\mathrm{B}\rightarrow\mathrm{S},n}^{[i,k]'},s^{[i,k]''}\}$
\ELSE
\STATE $P_{\mathrm{B}\rightarrow\mathrm{S},n}^{[i,k]}(l)=P_{\mathrm{B}\rightarrow\mathrm{S},n}^{[i,k]'},
P_{\mathrm{S}\rightarrow\mathrm{UE},n}^{[i,k]}(l)=P_{\mathrm{S}\rightarrow\mathrm{UE},n}^{[i,k]'},s^{[i,k]}(l)=s^{[i,k]''}$, $l=l+1$
 \ENDIF
\UNTIL{ $l=L_{\max}$}

\end{algorithmic}
\end{algorithm}
\normalsize
\end{table}

\subsection{Iterative Resource Allocation Algorithm}
The proposed iterative resource allocation algorithm is based on alternating  optimization. The algorithm is summarized in Table \ref{table:algorithm}. The algorithm is implemented by a repeated loop.  In line 2,  we first set the iteration index $l$ to zero
and initialize the resource allocation policy. Variables $P_{\mathrm{B}\rightarrow\mathrm{S},n}^{[i,k]}(l),P_{\mathrm{S}\rightarrow\mathrm{UE},n}^{[i,k]}(l)$ and $s^{[i,k]}(l)$ denote the resource allocation policy in the $l$-th iteration. Then, in each iteration, we solve  (\ref{eqn:cross-layer-formulation-transformed})  for $P_{\mathrm{B}\rightarrow\mathrm{S},n}^{[i,k]}$ using (\ref{eqn:P1}) with $s^{[i,k]},\forall i,k,$ and  $P_{\mathrm{S}\rightarrow\mathrm{UE},n}^{[i,k]}(l-1)$ from the last iteration. Then, we obtain an intermediate power allocation variable $P_{\mathrm{B}\rightarrow\mathrm{S},n}^{[i,k]'}$ which is used as an input for solving (\ref{eqn:cross-layer-formulation-transformed}) for $P_{\mathrm{S}\rightarrow\mathrm{UE},n}^{[i,k]}$ via (\ref{eqn:P2}), c.f., line 5.  We note that (\ref{eqn:P1}) and (\ref{eqn:P2}) are obtained by standard convex optimization techniques while variables $\lambda$ and $\beta$  in (\ref{eqn:P1}) and (\ref{eqn:P2}) are the Lagrange multipliers  with  respect to constraints C1 and C2 in (\ref{eqn:cross-layer-formulation-transformed}), respectively. The optimal values of $\lambda$ and $\beta$ in each iteration can be easily found with a standard gradient algorithm such that constraints C1 and C2  in (\ref{eqn:cross-layer-formulation-transformed}) are satisfied. After we obtain the intermediate solution for power allocation, we use it to derive   the optimal allocation of subcarrier $i$ at the
BS to UE $k$ which is given by
\begin{eqnarray}
\label{eqn:sub_selection}s^{[i,k]*}=
 \left\{ \begin{array}{rl}
 1 &\mbox{if $k =\arg\underset{t\in\{1,\ldots K\}}\max \,\Psi_t$, }  \\
 0 &\mbox{ otherwise,}
       \end{array} \right.
\end{eqnarray}
and
\begin{eqnarray}\label{eqn:subcarrier_allocation}
\hspace*{-5.5mm}\Psi_k\hspace*{-2mm}&=&\hspace*{-3mm}w^{[k]}\Big(\hspace*{-0.5mm}\sum_{n=1}^N\hspace*{-0.5mm}\log_2\Big(1+\widetilde{\mathrm{SINR}}^{[i,k]*}_n\Big)
\hspace*{-0.5mm}-\hspace*{-0.5mm}
\frac{\widetilde{\mathrm{SINR}}^{[i,k]*}_n}{1+\widetilde{\mathrm{SINR}}^{[i,k]*}_n}\hspace*{-0.5mm}\Big).
\end{eqnarray}
$\widetilde{\mathrm{SINR}}^{[i,k]*}_n$ is obtained by substituting the intermediate solution of $P_{\mathrm{B}\rightarrow\mathrm{S},n}^{[i,k]'}$ and $P_{\mathrm{S}\rightarrow\mathrm{UE},n}^{[i,k]'}$, i.e., (\ref{eqn:P1}) and (\ref{eqn:P2}), into (\ref{eqn:SINR_2}) in the $l$-th iteration. Then, the  procedure is repeated iteratively until we reach the maximum number of iterations or convergence is achieved.  We note that the convergence to the optimal solution of (\ref{eqn:cross-layer-formulation-transformed}) is guaranteed for a large number of iterations since  (\ref{eqn:cross-layer-formulation-transformed}) is concave with respect to the optimization variables \cite{JR:AO}. Besides, the proposed algorithm has a polynomial time computational complexity.

\section{Results and Discussions}\label{sect:result-discussion}
In this section, we evaluate the system performance based on simulations. We assume that there are $K$ UEs located in an indoor environment and the BS is located outdoor.
 For the BS-to-SUDAS links, we adopt the Urban macro outdoor-to-indoor scenario of the Wireless World Initiative New Radio (WINNER+) channel model \cite{Spec:Winner+}. The center frequency  and the bandwidth of the licensed band are  $800$ MHz and $10$ MHz, respectively.  There are $600$  subcarriers which are grouped into $50$ resource blocks with $12$ subcarriers per resource block for data transmission. Each subcarrier has a bandwidth of $15$ kHz. Hence, the BS-to-SUDAS link configuration is in accordance with the Long Term Evolutions (LTE) standard \cite{Spec:LTE}.  As for the SUDAS-to-UE links, we adopt the IEEE $802.11$ad channel model \cite{Spec:60GHz} in the range of $60$ GHz. There are $M$ subbands. The first subband has a frequency range from $60$ GHz to $60.01$ GHz and there is a $30$ MHz  guard band between any two consecutive subbands. The maximum transmit power per SUDAC is set to $P_{\max}=23$ dBm which is in accordance with the maximum power spectral density suggested by the Harmonized European Standard  \cite{Spec:BRAN}, i.e.,  $13$ dBm-per-MHz. For simplicity, we assume that  $w^{[k]}=1,\forall k$, and $N_{\mathrm{S}}=\min\{N_{\mathrm{T}},M\}$ for
studying the system performance.
\begin{figure}[t]
 \centering
\includegraphics[width=3.5in]{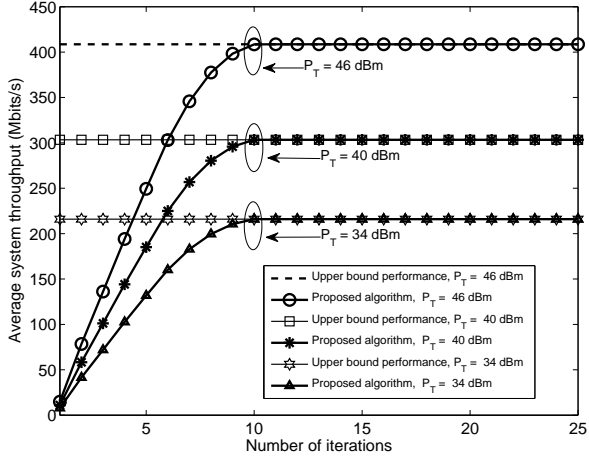}\vspace*{-1mm}
 \caption{Average system throughput (Mbits/s) versus the number of iterations for $N_{\mathrm{T}}=8$ transmit antennas at the BS, $K=2$ UEs, $M=8$ SUDACs, and different maximum transmit powers at the BS, $P_{\mathrm{T}}$.}
 \label{fig:convergence}
\end{figure}

\subsection{Convergence of the Proposed Iterative Algorithm}
Figure \ref{fig:convergence} illustrates the convergence of
the proposed algorithm  for $N_{\mathrm{T}}=8$ transmit antennas at the BS, $K=2$ UEs, $M=8$ SUDACs, and different maximum transmit powers at the BS, $P_{\mathrm{T}}$. We compare the system performance of the proposed algorithm with a performance upper bound  which is obtained by computing the optimal objective value in (\ref{eqn:cross-layer-formulation-transformed}), i.e., assuming noise free reception at the UEs. The performance gap between the two curves constitutes an upper bound on the performance loss due to the high SINR approximation adopted in (\ref{eqn:SINR_1}). It can be seen that
the proposed algorithm approaches $99\%$ of the upper bound
value after $20$ iterations which  confirms the practicality of the proposed iterative algorithm.

\subsection{Average System Throughput versus Transmit Power}
Figure \ref{fig:TP_PT} illustrates the average system
throughput versus the maximum transmit power at the BS for $K=2$ UEs, $M=8$, and different numbers of transmit antennas $N_{\mathrm{T}}$ at the BS. The
performance of the proposed algorithm for the SUDAS with $10$ iterations
is compared with that of a benchmark
MIMO system and a baseline system.
For the benchmark MIMO system, we assume that each UE is equipped with $M$ receive antennas without the help of the SUDAS and optimal resource allocation is performed to maximize the  system throughput. In other words, the average system throughput of the benchmark system serves as a performance upper bound for the proposed SUDAS.  As for the baseline system, the BS performs optimal resource allocation and utilizes only the licensed frequency band without the help of the SUDAS and the UEs have only one antenna.  As can be observed, the proposed SUDAS is able to exploit the spatial multiplexing gain even though each UE is equipped with a single antenna. Besides, a huge performance gain is achieved by the SUDAS compared   to the baseline system as the SUDAS utilizes both licensed and unlicensed bands. On the other hand, the performance of the proposed scheme and the benchmark system improves rapidly with  increasing number of transmit antennas  due to  more degrees of freedom for resource allocation.
\begin{figure}[t]
 \centering
\includegraphics[width=3.5in]{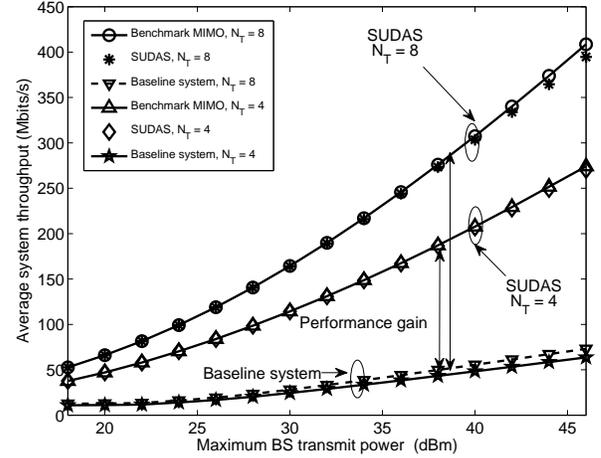}\vspace*{-1mm}
 \caption{Average system throughput (Mbits/s) versus the maximum transmit power at the BS (dBm)  for $K=2$ UEs, $M=8$ SUDACs, and different communication systems. The double-sided arrows indicate the throughput gains achieved by the SUDAS compared to the baseline system.}
 \label{fig:TP_PT}
\end{figure}
\begin{figure}[t]
 \centering
\includegraphics[width=3.5in]{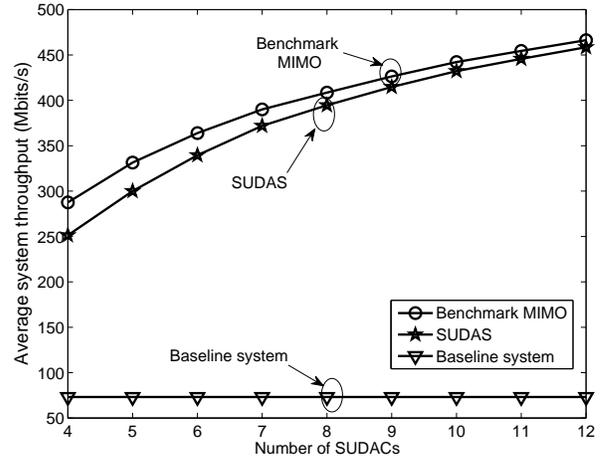}\vspace*{-1mm}
 \caption{Average system throughput (Mbits/s) versus the number of SUDACs for a maximum transmit power at the BS of $P_{\mathrm{T}}=46$ dBm for  different communication systems.}
 \label{fig:TP_UE}
\end{figure}
\subsection{Average System Throughput versus Number of SUDACs}
Figure \ref{fig:TP_UE} depicts the average system throughput
versus the number of SUDACs for $N_{\mathrm{T}}=8$.
The maximum BS transmit power is $46$ dBm.  It can be observed that the
system throughput grows with the number of SUDACs. In particular, a higher spatial multiplexing gain can be achieved when we increase the number of SUDACs $M$ if $N_{\mathrm{T}}\ge M$. For $M>N_{\mathrm{T}}$, increasing the number of SUDACs in the system leads to more  spatial diversity which also improves the system throughput.

\section{Conclusion}\label{sect:conclusion}
In this paper, we studied the resource allocation algorithm design for SUDAS assisted downlink  multicarrier transmission.
In particular, the SUDAS utilizes both licensed and unlicensed frequency bands  for improving the system throughput.  The resource allocation algorithm design was formulated as a non-convex matrix optimization problem. In order to obtain a tractable solution,
we  revealed the structures of the optimal precoding  matrices such that the  problem could be
transformed into a scalar optimization problem. Based on this result, we proposed an efficient
iterative  resource allocation algorithm to solve the problem by
alternating optimization. Our simulation results show that the proposed
SUDAS assisted transmission provides a substantial throughput gain compared to conventional systems which do not utilize the unlicensed frequency spectrum. It is expected that the proposed SUDAS is able to bridge the gap between the current technology and the high data rate requirement of the next generation communication systems.

\section*{Appendix-Proof of Theorem 1}
Due to the page limitation, we provide only a sketch of the proof  which follows a similar proof in \cite{JR:Yue_Rong_diagonalization,JR:Kwan_FD}.  We first show that the optimal
$\mathbf{P}^{[i,k]}$ and $\mathbf{F}^{[i,k]}$ jointly diagonalize the end-to-end channels on
each subcarrier for the maximization of the objective function in (\ref{eqn:cross-layer-formulation}). Then, we construct
the optimal precoding and post-processing matrices based on the optimal structure. The MSE matrix for data transmission on subcarrier $i$ for UE $k$ can be written as:
\begin{eqnarray}
\mathbf{E}^{[i,k]}&\hspace*{-4mm}=&\hspace*{-3mm}\Big(\mathbf{I}_{N_\mathrm{S}}+(\mathbf{\Gamma}^{[i,k]})^H(\mathbf{\Theta}^{[i]})^{-1}\mathbf{\Gamma}^{[i,k]}\Big)^{-1}
\\
&\hspace*{-4mm}=&\hspace*{-3mm}\notag\mathbf{I}_{N_\mathrm{S}}-(\mathbf{\Gamma}^{[i,k]})^H\Big(\mathbf{\Gamma}^{[i,k]}(\mathbf{\Gamma}^{[i,k]})^H+\mathbf{\Theta}^{[i]}\Big)^{-1}\mathbf{\Gamma}^{[i,k]}.
\end{eqnarray}
Since the objective function for each subcarrier is a Schur-concave function, by applying the majorization theory \cite{book:majorization}, it can be shown that   the sum of the diagonal elements of the MSE matrix is minimized when  matrix $(\mathbf{\Gamma}^{[i,k]})^H\Big(\mathbf{\Gamma}^{[i,k]}(\mathbf{\Gamma}^{[i,k]})^H+\mathbf{\Theta}^{[i]}\Big)^{-1}\mathbf{\Gamma}^{[i,k]}$ is a diagonal matrix. In other words, the objective function is maximized when the MSE matrix is a diagonal matrix.

On the other hand, we focus on the power consumption constraints C1 and C2 in (\ref{eqn:cross-layer-formulation}). For a given subcarrier allocation and a given achievable data rate, it can be shown that
the transmit powers at the BS and the SUDAS are minimized when
 matrices
$\mathbf{P}^{[i,k]}$ and $\mathbf{F}^{[i,k]}$ are given by
\begin{eqnarray}\label{eqn:optimal_B}
\mathbf{P}^{[i,k]}&=&\mathbf{\widetilde V}^{[i]}_{\mathrm{B}\rightarrow\mathrm{S}}\mathbf{\Lambda}^{[i,k]}_{\mathrm{B},k}\quad \mbox{and}\\
\label{eqn:optimal_F}\mathbf{F}^{[i,k]}
&=&\mathbf{\widetilde V}^{[i]}_{R_m,k}\mathbf{\Lambda}^{[i,k]}_{\mathrm{F},k}(\mathbf{\widetilde U}^{[i]}_{\mathrm{B}\rightarrow\mathrm{S}})^H,
\end{eqnarray}
respectively, where all involved matrices are defined after (\ref{eqn:matrix_P}).  Since both $\mathbf{P}^{[i,k]}$ and $\mathbf{F}^{[i,k]}$
 in (\ref{eqn:optimal_B}) and (\ref{eqn:optimal_F}) jointly
diagonalize the end-to-end equivalent  channel and achieve the minimum transmit power for any achievable system data rate, (\ref{eqn:optimal_B}) and (\ref{eqn:optimal_F}) are the optimal precoding and post-processing matrices.

\qed

\bibliographystyle{IEEEtran}
\bibliography{IEEEabrv,OFDMA-AF}

\end{document}